\newtheorem{lemma}{Lemma}
\newtheorem{theorem}{Theorem}
\newtheorem{corollary}{Corollary}
\begin{document}

	\title{Coverage Analysis for Cellular-Connected Random 3D Mobile UAVs with Directional Antennas}

	\author{
		Hongguang~Sun, \emph{Member, IEEE,} 
		Chao~Ma, 
		Linyi Zhang, 
		Jiahui Li, 
		Xijun~Wang, \emph{Member, IEEE,} \\ 
		Shuqin~Li, 
		and Tony Q.S. Quek, \emph{Fellow, IEEE}
		\vspace{-1em}
	}
	
	\iffalse
	\IEEEpubid{0000--0000/00\$00.00~\copyright~2021 IEEE}
	\fi
	\maketitle

	\begin{abstract}
		This letter proposes an analytical framework to evaluate the coverage performance of a cellular-connected unmanned aerial vehicle (UAV) network in which UAV user equipments (UAV-UEs) are equipped with directional antennas and move according to a three-dimensional (3D) mobility model. The ground base stations (GBSs) equipped with practical down-tilted antennas are distributed according to a Poisson point process (PPP). With tools from stochastic geometry, we derive the handover probability and coverage probability of a random UAV-UE under the strongest average received signal strength (RSS) association strategy. The proposed analytical framework allows to investigate the effect of UAV-UE antenna beamwidth, mobility speed, cell association, and vertical motions on both the handover probability and coverage probability. We conclude that the optimal UAV-UE antenna beamwidth decreases with the GBS density, and the omnidirectional antenna model is preferred in the sparse network scenario. What's more, the superiority of the strongest average RSS association over the nearest association diminishes with the increment of GBS density.
	\end{abstract}

	\begin{IEEEkeywords}
		Cellular-connected UAVs, 3D mobility, directional antenna, handover probability, coverage probability.
	\end{IEEEkeywords}

	\vspace{-0.5em}
	\section{Introduction}

	In recent years, cellular-connected unmanned aerial vehicle (UAV) has attracted more and more attention from both industry and academia, where the ubiquitous cellular networks have served as the communication technology to provide reliable connectivity [1][2]. However, equipped with down-tilted antennas, the ground base stations (GBSs) have been optimized for ground users, while serving UAV user equipments (UAV-UEs) with only weak side-lobes of antennas. What's worse, the three-dimensional (3D) mobility of UAV-UEs, especially the variation in flight altitude, results in frequent handover \cite{ref3}.

	To boost the development of cellular-connected UAV technology, it is of great importance to evaluate the impact of 3D mobility on network performance. Although several analytical frameworks have been proposed to assess the performance of cellular-connected UAV networks, most of them considered the static UAV-UE scenario and focused on the impact of down-tilted antennas, LoS/NLoS components, or UAV-UE flight altitude \cite{ref4}. For the mobile UAV scenarios, the authors in \cite{ref5} restricted the movement of UAV-GBSs within a finite 3D cylindrical region by using a mixed random mobility model. The authors in \cite{ref6} proved that to provide a uniform ground user coverage, the UAV-GBSs should follow certain trajectory processes. The authors in \cite{ref7} proposed several canonical mobility models to characterize the mobile UAV-GBSs in an infinite drone cellular network. However, the UAV-GBSs mobility models proposed in [5][6] are not applicable to mobile UAV-UEs, since UAV-UEs can travel very long distances and cross multiple GBSs serving areas which is far beyond a finite 3D cylindrical region. In addition, the two-dimensional (2D) mobility models proposed in \cite{ref7} for UAV-GBSs cannot capture the 3D motions of UAV-UEs. 

	\begin{figure}[!t]
		\centering
		\includegraphics[width=1\linewidth]{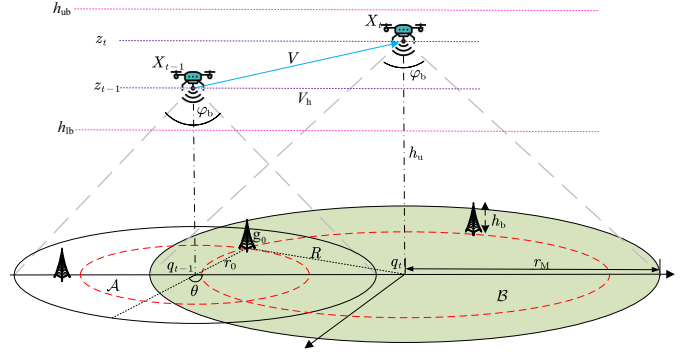}
		\captionsetup{font={scriptsize}}
		\caption{
			Illustration of a handover scenario for a 3D mobile UAV-UE equipped with directional antenna, where $h_{\mathrm{u}}$ and $\varphi_{\mathrm{b}}$ represent the UAV height and antenna beamwidth, respectively. The UAV-UE is initially located at $X_{t-1}$ at the starting waypoint of the $t$-th movement with $q_{t-1}$ being the horizontal projection, and $r_{0}$ denoting the horizontal distance between the serving GBS $g_{0}$ and the UAV-UE located at $X_{t-1}$. The UAV-UE moves to $X_{t}$ at speed $V$ (with horizontal speed $V_{\mathrm{h}}$).
		}
		\label{fig_1}
		\vspace{-1em}
	\end{figure}

	To improve the reliable connectivity for 3D mobile UAV-UEs, directional antenna technology can be served as an effective approach. By limiting the antenna beamwidth of UAV-UEs, the aggregated interference from interfering GBSs can be restricted while the higher antenna gain can be achieved, leading to the enhanced coverage probability. Although the effect of directional antenna model was quantified in \cite{ref4}, the 3D motions of UAV-UEs were not captured. The authors in \cite{ref8} proposed a 3D random way-point (RWP) mobility model, and obtained the rigorous coverage analysis of coordinated multipoint (CoMP) transmissions for 3D mobile UAV-UEs equipped with omnidirectional antennas under the nearest association strategy. The analytical complexity gets even worse when the strongest average received signal strength (RSS) association strategy is adopted. Because of the LoS/NLoS air-to-ground (A2G) transmissions, the serving GBS may not be the nearest one. When a handover event occurs caused by the 3D mobility of a UAV-UE, there may exist four possibilities: the UAV-UE may handover from a LoS/NLoS serving GBS to a LoS/NLoS target GBS.

	Motivated by the aforementioned, in this letter, we aim to quantify the superiority of the directional antenna technology in improving the coverage performance of a random 3D mobile UAV-UE by adopting the strongest average RSS association strategy. With tools from stochastic geometry, we derive the exact analytical expressions of handover probability and coverage probability. Moreover, the impacts of some key system parameters on the network performance are evaluated.

	\section{SYSTEM MODEL}

	\subsection{Network Model}
		We consider a downlink cellular-connected UAV network composed of GBSs and UAV-UEs as shown in Fig. \ref{fig_1}. The GBSs are assumed to have a fixed transmit power $\mathrm{P_{t}}$ and a constant height $h_{\mathrm{b}}$, whose two-dimensional (2D) spatial locations are distributed according to a homogeneous Poisson point process (HPPP) $\Phi$ with density $\mit\lambda$$_{\mathrm{b}}$. We consider the practical GBS antenna structure, which is implemented by multiple sector antennas, leading to the vertically directional and horizontally omnidirectional radiation pattern. Since GBS antennas are typically down-tilted to provide sufficient coverage for ground users \cite{ref9}, UAV-UEs can be only served by GBSs through sidelobe with gain $\mathrm{G_{b}}$. UAV-UEs are initially distributed according to another point process and move according to a random three-dimensional (3D) mobility model which is detailed subsequently. Without loss of generality, we focus on a typical UAV-UE for performance analysis.
		The altitude of the typical UAV-UE is denoted by $h_{\mathrm{u}}$ which is dependent on the vertical motion pattern. The UAV-UE is equipped with a directional antenna of which the beamwidth $\varphi_{\mathrm{b}}$ can be adjusted via a mechanical or electrical mechanism with the main-lobe gain being approximately by $\mathrm{G_{u}}=29000/\varphi_{\mathrm{b}}^2$, and the gain outside of main-lobe being 0\cite{ref8}. Therefore, the total impact of GBS and UAV-UE antenna gains is defined as $\mathrm{G_{tot}=G_{b}G_{u}}$.

	\subsection{Mobility Model}

		We adopt the 3D RWP model proposed in \cite{ref8} to represent the 3D movement process of UAV-UEs. The model defines the vertical movement of UAV-UEs in the limited height area, which is used to model the scenario where UAV-UEs must alternate their altitudes by making vertical motions as shown in Fig. \ref{fig_1}.
		We define triples $\left\{  X_{t-1},X_{t},V \right\},\forall X_{t} \in \mathbb{R}^{3},t \in \mathbb{N}$ to represent the movement track of the typical UAV-UE, where $t$ is the sampling interval of the track point, $X_{t-1}$ and $X_{t}$, respectively, represents the starting and end way-points of the $t$-th movement and $V$ is the UAV-UE speed which is assumed to be constant. We assume that the horizontal speed $V_{\mathrm{h}}$ of the UAV-UE from $X_{t-1}$ to $X_{t}$ is a random variable (RV), where $V_{\mathrm{h}} = \frac{V\varrho_{t} }{\sqrt{\varrho_{t} ^{2}+\left ( z_{t} - z_{t-1}  \right ) ^{2}  } } $. Particularly, similar to [8], the horizontal transition lengths ${\rho_{1}, \rho_{2},... }$ are chosen to be independent and identically distributed (i.i.d.), which follows the Rayleigh distribution with mobility parameter $\mu$ resulting in the displacement probability density function (PDF) $f_{\rho_{t}}\left( \varrho_{t} \right)  = 2 \pi \mu \varrho_{t} e^{-\pi \mu \varrho_{t}^{2}}$.
		We assume that the UAV-UE height $Z_{t}$ is uniformly distributed on $\left[ {h_{\mathrm{lb}},h_{\mathrm{ub}}} \right]$, i.e., $Z_{t} \sim \mathcal{U} \left ( h_{\mathrm{lb}},h_{\mathrm{ub}} \right ) $ with $f_{Z_{t}}\left( z_{t} \right)  = \frac{1}{h_{\mathrm{ub}}-h_{\mathrm{lb}}}  , \forall h_{\mathrm{lb}} \le z_{t} \le h_{\mathrm{ub}}$.

		We assume that the typical UAV-UE can move in any direction at the horizontal projection on the $x-y$ plane with equal probability. As shown in Fig. \ref{fig_1}, $\theta$ is taken with respect to the horizontal direction of connection between the UAV-UE located at $X_{t-1}$ and its serving GBS $g_{0}$. Denoting the corresponding RV as $\Theta$, the PDF of $\Theta \in \left [0, \pi \right ]$ follows the uniform distribution given by $f_{\Theta}(\theta)=\frac{1}{\pi} $. A handover occurs when there exists another GBS providing stronger signal than its original serving GBS located at $g_{0}$.

	\subsection{Channel Model}

		We adopt a realistic air-to-ground (A2G) channel model, consisting of both large-scale path loss and small-scale fading. We consider the possibility of LoS and NLoS transmissions, and define the corresponding path loss exponents as $\alpha_{\mathrm{L}}$ and $\alpha_{\mathrm{N}}$, $2 < \alpha_{\mathrm{L}} < \alpha_{\mathrm{N}}$. According to the movement of UAV, the path loss is expressed as
		\begin{equation}
		\small
			\zeta_{\vartheta}\left( {r,h_{\mathrm{u}}} \right) = \eta_{\vartheta}\left\lbrack {r^{2} + \left( h_{\mathrm{u}} - h_{\mathrm{b}}\right) ^{2}} \right\rbrack^{- \frac{\alpha_{\vartheta}}{2}}
		\end{equation}
		
		where $\vartheta \in \left\{ \mathrm{L},\mathrm{N} \right\}$ with $\mathrm{L}$ and $\mathrm{N}$ , respectively, referring to the LoS and NLoS transmissions, and $r$ is the horizontal distance from UAV-UE to its serving GBS. Due to the 3D mobility model, when the UAV-UE is located at $X_{t-1}$ and $X_{t}$ at the $t$-th movement, $h_{\mathrm{u}}$ refers to $Z_{t-1}$ and $Z_{t}$, respectively. Similarly, $\eta_{\mathrm{L}}$ and $\eta_{\mathrm{N}}$ are the path loss constants at the reference distance $d=1\mathrm{m}$ for the LoS and NLoS components, respectively. 

		We adopt the Nakagami-$m_{\vartheta}$ model for the small-scale fading with the PDF given by $f_{\Omega }\left( \omega \right) = \frac{m_{\vartheta}^{m_{\vartheta}}\omega^{m_{\vartheta} - 1}}{\Gamma\left( m_{\vartheta} \right)}e^{- m_{\vartheta}\omega}$, where $m_{\mathrm{L}}$, $m_{\mathrm{N}}$ with $m_{\mathrm{L}}>m_{\mathrm{N}}$ represent the fading parameters of LoS and NLoS transmissions respectively, the channel gain power $\omega = x^{2}\sim\mathrm{Gamma} \left ( m_{\vartheta},1/m_{\vartheta} \right ) $ when $x\sim$ Nakagami-$m_{\vartheta}$ and $\Gamma\left( . \right)$ is the Gamma function \cite{ref10}.

		For the A2G channel model, the LoS probability is not only related to the environment, including the density and height of buildings, but also the elevation angle of the UAV-UE to the GBS, which is given by\cite{ref5}
		\begin{equation}
		\small
			P_{\mathrm{L}}\left( r,h_{\mathrm{u}} \right) = \frac{1}{1 + \mathrm{a} \exp \left( {- \mathrm{b}\left( {\frac{180}{\pi}{\mathrm{tan}}^{- 1}\left( \frac{h_{\mathrm{u}} - h_{\mathrm{b}}}{r} \right) - \mathrm{a}} \right)} \right)}
		\end{equation}
		where $\mathrm{a}$ and $\mathrm{b}$ are parameters related to the environment. It can be seen that the LoS probability increases with $h_{\mathrm{u}}$ and decreases with $r$. Therefore, the NLoS probability can be expressed as $P_{\mathrm{N}}\left( r,h_{\mathrm{u}} \right) = 1 - P_{\mathrm{L}}\left( r,h_{\mathrm{u}} \right)$.

	\subsection{Cell Association and SIR}

		In this letter, we consider the strongest average RSS association strategy which refers to the case that the UAV-UE associates with the GBS that provides the strongest average received signal power.

		To be specific, the received signal power of the typical UAV-UE located at a horizontal distance $r_{0}$ away from its serving GBS $g_{0}$ is given by
		\begin{equation}
		\small
			\mathrm{P_{r}}\left ( r_{0} \right )  = \mathrm{P_{t}G_{tot}}\zeta_{\vartheta} \left ( r_{0},h_{\mathrm{u}} \right ) \Omega_{\vartheta},\ \vartheta \in \left\{ \mathrm{L},\mathrm{N} \right\}
		\end{equation}
		where $\Omega_{\vartheta}$ denotes the channel power gain when associating with a $\vartheta$-type serving GBS. Therefore, the signal-to-interference ratio (SIR) of the typical UAV-UE is expressed as
		\begin{equation}
		\small
			\mathrm{SIR}=\frac{\mathrm{P_{r}}\left ( r_{0} \right ) }{I} =\left\{\begin{matrix} \frac{\mathrm{P_{t}G_{tot}}\zeta_{\mathrm{L}}\left ( r_{0},h_{\mathrm{u}} \right )\Omega _{\mathrm{L}}}{ { \sum_{i\in \Phi \setminus \left\lbrace g_{0}\right\rbrace }\mathrm{P_{r}}\left ( r_{i} \right ) } } \ ;\ \mathrm{for\ LoS}\ \ \ \\ \frac{\mathrm{P_{t}G_{tot}}\zeta_{\mathrm{N}}\left ( r_{0},h_{\mathrm{u}} \right )\Omega _{\mathrm{N}}}{ { \sum_{i\in \Phi \setminus \left\lbrace g_{0}\right\rbrace }\mathrm{P_{r}}\left ( r_{i} \right ) } } \ ;\ \mathrm{for\ NLoS} \end{matrix}\right.
		\end{equation}
		Note that we consider the interference-limited region and ignore the noise power in (4), where $g_{0}$ denotes the serving GBS of the typical UAV-UE.

	\section{HANDOVER ANALYSIS}

		Referring to Fig. \ref{fig_1}, equipped with the directional antenna with beamwidth $\varphi_{b}$, the horizontal projection of the signal/interference receiving range of the typical UAV-UE is limited to a circle $\mathcal{O}\left (q_{t}, r_\mathrm{M} \right )$ with $q_{t}$ being the horizontal position of the UAV-UE at the end way-point of the $t$-th movement, and $r_\mathrm{M}$ being the radius. We assume that $H_{\vartheta}^{\varsigma}$ represents the event that the UAV-UE handovers from a $\varsigma$-type serving GBS to a $\vartheta$-type target GBS for $\vartheta,\varsigma \in \left \{ \mathrm{L,N} \right \}$.

		\begin{lemma}\label{Lemma1}
		Given the $\varsigma$-type original serving GBS, $r_0$ and $z_t$, the conditional handover probability of the typical UAV-UE equipped with directional antenna is given by (5) (as shown at the top of the page).
		\begin{figure*}
		\vspace{-2em}
			\begin{equation}
			\small
				\begin{aligned}
				\mathbb{P}\left(H_{\vartheta}^{\varsigma} \mid r_{0},z_{t}\right)=\begin{cases}
				1 - \frac{1}{\pi} \int_{0}^{\pi} \int_{0}^{\infty }\int_{h_{\mathrm{lb}}}^{h_{\mathrm{ub}}} {\mathit{\exp}\left( {- \lambda_{\mathrm{b}} F \left ( D_{\vartheta}^{\varsigma }\left( r_{0} \right) , D_{\vartheta}^{\varsigma }\left( R \right) \right )}\right)}f_{Z_{t-1}}\left ( z_{t-1}  \right )f_{\rho_{t}}\left ( \varrho_{t}  \right ) dz_{t-1}d\varrho_{t} d\theta ,&{\text{if}\ D_{\vartheta}^{\varsigma }\left( R \right)\le r_{\mathrm{M}}}\\
				1 - \frac{1}{\pi} \int_{0}^{\pi} \int_{0}^{\infty }\int_{h_{\mathrm{lb}}}^{h_{\mathrm{ub}}}{\mathit{\exp}\left( {- \lambda_{\mathrm{b}} F \left ( D_{\vartheta}^{\varsigma }\left( r_{0} \right) , r_{\mathrm{M}} \right )}\right)}f_{Z_{t-1}}\left ( z_{t-1}  \right )f_{\rho_{t}}\left ( \varrho_{t}  \right )dz_{t-1}d\varrho_{t} d\theta , &{\text{if}\ 0<r_{\mathrm{M}}<D_{\vartheta}^{\varsigma }\left( R \right)}
				\end{cases}
				\end{aligned}
			\end{equation}
		\vspace{-2em}
		\end{figure*}
	
	 	\noindent 
		where $ R=\sqrt{r_{0}^{2} + V_{\mathrm{h}}^{2} + 2r_{0}V_{\mathrm{h}}\cos \theta}$ is the horizontal distance between the typical UAV-UE and the original serving GBS $g_{0}$ when moving to $X_{t}$, $V_{h}$ is the horizontal movement length within the unit time, $r_{\mathrm{M}} = \bar{h}{\mathit{\tan}\left( \frac{\varphi_{\mathrm{b}}}{2} \right)}$ with $\bar{h} = z_{t} - h_{\mathrm{b}}$, $D_{\vartheta}^{\varsigma }\left( x \right) =\left\{\begin{matrix} x & \vartheta = \varsigma   \\ \sqrt{\left( \frac{\eta_{\varsigma }}{\eta_{\vartheta}} \right)^{\frac{2}{\alpha_{\varsigma }}}\left( {x^{2} + {\bar{h}}^{2}} \right)^{\frac{\alpha_{\vartheta}}{\alpha_{\varsigma }}} - {\bar{h}}^{2}} & \vartheta \neq \varsigma \end{matrix}\right.$, and $F \left ( x,y \right ) = y^{2}{\left( \pi -  \mathit{\cos}^{- 1}\left( \frac{y^{2} + V_{\mathrm{h}}^{2} - x^{2}}{2yV_{\mathrm{h}}} \right)\right) }-x^{2}{\mathit{\cos}^{- 1}\left( \frac{x^{2} + V_{\mathrm{h}}^{2} - y^{2}}{2xV_{\mathrm{h}}} \right)}+\frac{1}{2}\sqrt{\left [ \left ( x+V_{\mathrm{h}} \right )^{2}-y^{2} \right ]\left [ y^{2}-\left ( x-V_{\mathrm{h}} \right )^{2} \right ] }$.
		\end{lemma}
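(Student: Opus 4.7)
The plan is to condition on the movement geometry (the direction $\theta$, the horizontal transition length $\varrho_t$, and the initial height $z_{t-1}$), identify the planar region in which the presence of a $\vartheta$-type GBS would trigger the event $H_\vartheta^\varsigma$, and then invoke the void probability of the PPP on that region. First I would derive the critical horizontal distance $D_\vartheta^\varsigma(R)$: under the strongest average RSS rule, a $\vartheta$-type GBS at horizontal distance $x$ from the new UAV position $q_t$ beats the $\varsigma$-type serving GBS at distance $R$ iff $\eta_\vartheta[x^2+\bar h^2]^{-\alpha_\vartheta/2} > \eta_\varsigma[R^2+\bar h^2]^{-\alpha_\varsigma/2}$, which solves to the expression given in the lemma (and collapses to $x=R$ when $\vartheta=\varsigma$). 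The new horizontal distance $R$ to $g_0$ comes from the law of cosines applied to the triangle $q_{t-1}q_t g_0$ with sides $r_0$ and $V_h$ and the included angle induced by $\theta$; the visibility radius $r_M=\bar h\tan(\varphi_b/2)$ follows from the cone geometry of Fig.~\ref{fig_1}.

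Next I would exploit the history. Because $g_0$ was the strongest candidate at $q_{t-1}$, no $\vartheta$-type GBS can lie inside the disk $\mathcal{B}_{t-1}:=\mathcal{O}(q_{t-1},D_\vartheta^\varsigma(r_0))$, otherwise it would already have defeated $g_0$. Hence, conditional on this history, the no-handover event reduces to ``no $\vartheta$-type GBS in $\mathcal{B}_t\setminus\mathcal{B}_{t-1}$'', where $\mathcal{B}_t:=\mathcal{O}(q_t,D_\vartheta^\varsigma(R))$. Combining this with the antenna-visibility constraint $\mathcal{O}(q_t,r_M)$ splits into two sub-cases: if $D_\vartheta^\varsigma(R)\le r_M$ the disk $\mathcal{B}_t$ is already inside the visible region and no clipping is needed; if $0<r_M<D_\vartheta^\varsigma(R)$ the relevant set becomes $\mathcal{O}(q_t,r_M)\setminus\mathcal{B}_{t-1}$, which corresponds to replacing the second radius by $r_M$. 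This produces the case split in formula~(5).

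The area of $\mathcal{O}(q_t,y)\setminus\mathcal{O}(q_{t-1},x)$, with centres separated by $V_h$ and radii $y$ and $x$, is computed by subtracting the classical two-circle lens area from $\pi y^2$; after the usual algebra this reproduces exactly the function $F(x,y)$ displayed in the lemma. By the void probability of a homogeneous PPP of intensity $\lambda_b$ (with independent thinning to isolate the $\vartheta$-type GBSs), the conditional no-handover probability is $\exp(-\lambda_b F(\cdot,\cdot))$. Subtracting from one and de-conditioning on the remaining randomness — namely integrating against $f_\Theta(\theta)=1/\pi$, $f_{\rho_t}(\varrho_t)=2\pi\mu\varrho_t e^{-\pi\mu\varrho_t^2}$ and the uniform density of $Z_{t-1}$ (which enters through $V_h=V\varrho_t/\sqrt{\varrho_t^2+(z_t-z_{t-1})^2}$) — yields the claimed expression.

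The main obstacle is the conditioning argument and its geometric bookkeeping: (i) rigorously justifying that the prior observation ``$g_0$ served at $q_{t-1}$'' implies the void of $\mathcal{B}_{t-1}$ while the PPP outside $\mathcal{B}_{t-1}$ retains the same intensity, so that the void probability applies on the set difference only; (ii) gluing this void with the antenna clipping to obtain the correct region in both sub-cases; and (iii) reducing the two-circle intersection area to the compact inverse-cosine plus square-root form that constitutes $F(x,y)$. Once these are in place, the remaining integration against the mobility densities is routine.
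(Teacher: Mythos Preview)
Your proposal is correct and follows essentially the same route as the paper's Appendix~A: identify the no-handover event with the void of the set difference $\mathcal{B}\setminus(\mathcal{A}\cap\mathcal{B})$ between two disks centred at $q_{t-1}$ and $q_t$, split into the two sub-cases according to whether the beam radius $r_{\mathrm{M}}$ clips $D_\vartheta^\varsigma(R)$, evaluate the lune area via the standard two-circle intersection formula to obtain $F(x,y)$, apply the PPP void probability, and then average over $\Theta$, $\rho_t$ and $Z_{t-1}$. Your write-up is in fact more explicit than the paper on the origin of $D_\vartheta^\varsigma(\cdot)$ and on the conditioning justification, but the argument is the same.
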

	
		\begin{proof}
		See Appendix A.
		\end{proof}
		\vspace{-0.4em} 
		From (5), we observe that $\mathbb{P}\left(H_{\vartheta}^{\varsigma} \mid r_{0},z_{t}\right)$ is closely related to $\varphi_{\mathrm{b}}$, $\lambda_{\mathrm{b}}$ and $V_{\mathrm{h}}$. A higher handover probability occurs when the vertical movements and horizontal transition change more dramatically, especially for a denser network. 
 
		We then derive the conditional handover probability of the 3D mobile typical UAV-UE when switching over to a $\vartheta$-type target GBS
		\begin{equation}
		\small
			\mathbb{P}\left(H_{\vartheta} \mid r_{0},z_{t}\right)= {\sum\limits_{\varsigma \in \mathrm{\{L,N\}}}} \mathbb{P}\left( H_{\vartheta}^{\varsigma} \mid  r_{0},z_{t}\right)A_{\varsigma}\left( z_{t} \right)
		\end{equation}
		where $A_{\varsigma}\left( z_{t} \right)$ is the probability that the typical UAV-UE associates with a $\varsigma$-type GBS, given by
		\begin{equation}
		\small
			A_{\varsigma}\left( z_{t} \right) = {\int_{0}^{r_{\mathrm{M}}}{\exp\left( - 2\pi\lambda_{\mathrm{b}}{\int_{0}^{r_{\xi}^{\varsigma}}xP_{\xi}\left( x, z_{t} \right)dx }\right)}}f_{R_{0}^{\varsigma}}\left( r_{0},z_{t} \right)dr_{0}
		\end{equation}
		where $\varsigma \neq \xi \in \left\{ \mathrm{L},\mathrm{N} \right\}$ with $r\mathrm{_{L}^{N}} = \mathrm{min}\left( \rho\mathrm{_{N-L}}, r_{\mathrm{M}} \right)$, $r\mathrm{_{N}^{L}} = \mathrm{max}\left( \rho\mathrm{_{L-N}}, 0 \right)$. Given that the typical UAV-UE associates with a LoS (NLoS) GBS, the nearest NLoS (LoS) GBS is at least at distance $\rho_{\mathrm{L-N}} = \sqrt{\left( \frac{\eta_{\mathrm{N}}}{\eta_{\mathrm{L}}} \right)^{\frac{2}{\alpha_{\mathrm{N}}}}\left( {r_{0}^{2} + {\bar{h}}^{2}} \right)^{\frac{\alpha_{\mathrm{L}}}{\alpha_{\mathrm{N}}}} - {\bar{h}}^{2}}$ $\left( \rho_{\mathrm{N-L}} = \sqrt{\left( \frac{\eta_{\mathrm{L}}}{\eta_{\mathrm{N}}}\right)^{\frac{2}{\alpha_{\mathrm{L}}}}\left( {r_{0}^{2} + {\bar{h}}^{2}}\right)^{\frac{\alpha_{\mathrm{N}}}{\alpha_{\mathrm{L}}}} -{\bar{h}}^{2}}\right) $. $f_{R_{0}^{\varsigma}}\left( r_{0},z_{t} \right) = 2\pi\lambda_{\mathrm{b}}r_{0}P_{\varsigma}\left( r_{0}, z_{t} \right) \exp \left( - 2\pi\lambda_{\mathrm{b}}{\int_{0}^{r_{0}}xP_{\varsigma}\left( x, z_{t} \right)dx} \right)$ represents the PDF of the distance between the typical UAV-UE and the nearest $\varsigma$-type GBS.

	\section{COVERAGE PROBABILITY ANALYSIS}

		The coverage probability is defined as the probability that the SIR of the typical UAV-UE is larger than a given threshold $\mathrm{T}$, which can be expressed as
		\begin{equation}
		\small
			P_{\mathrm{cov}} = \mathbb{P}\left( {\text{SIR} > \mathrm{T},\bar{H}} \right) + \left( {1 - \kappa} \right)\mathbb{P}\left( {\text{SIR} > \mathrm{T}}, H \right)
			\vspace{-0.5em}
		\end{equation}
		where $\kappa$ is the connection failure probability \cite{ref8} caused by the handover, and $\bar{H}$ indicates the event that no handover occurs.

		\begin{theorem}
			The coverage probability of the typical 3D mobile UAV-UE equipped with directional antenna is given by
			\begin{equation}
			\small
				\begin{aligned}
					&P_{\mathrm{cov}} = \\&\left( {1 - \kappa} \right)\sum_{\vartheta \in \{ \mathrm{L,N}\}} {\int_{h_{\mathrm{lb}}}^{h_{\mathrm{ub}}}{{\int_{0}^{r_{\mathrm{M}}}     A_{\vartheta}\left ( z_{t} \right )\mathbb{P}_{\mathrm{cov}}^{\vartheta}\left( r_{0},z_{t} \right) }}}\tilde{f}_{R_{0}^{\vartheta}}\left( r_{0},z_{t} \right) \\& \times f_{Z_{t}}\left( z_{t}\right)dr_{0}dz_{t} +\kappa\sum_{\vartheta \in \{ \mathrm{L,N}\}}{\int_{h_{\mathrm{lb}}}^{h_{\mathrm{ub}}}{{\int_{0}^{r_{\mathrm{M}}}A_{\vartheta}\left ( z_{t} \right ){\mathbb{P}\left(\bar{H}_{\vartheta} \mid r_{0},z_{t}\right)}}}} \\& \times \mathbb{P}_{\mathrm{cov}}^{\vartheta}\left( r_{0},z_{t} \right)\tilde{f}_{R_{0}^{\vartheta}}\left( r_{0},z_{t}\right) f_{Z_{t}}\left( z_{t}\right) dr_{0}dz_{t}
				\end{aligned}
			\vspace{-0.5em}
			\end{equation}
			where $\tilde{f}_{R_{0}^{\vartheta}}\left( r_{0},z_{t} \right) = \frac{f_{R_{0}^{\vartheta}}\left( r_{0},z_{t} \right)}{A_{\vartheta}\left ( z_{t} \right )}{\exp\left( - 2\pi\lambda_{\mathrm{b}}{\int_{0}^{r_{\xi}^{\vartheta}}xP_{\xi}\left( x, z_{t} \right)dx }\right)} $ represents the PDF of the distance between the typical UAV-UE and the serving $\vartheta$-type GBS with $\vartheta \neq \xi \in \left\{ \mathrm{L},\mathrm{N} \right\}$. $\mathbb{P}_{\mathrm{cov}}^{\vartheta}\left( r_{0},z_{t} \right)$ denotes the conditional coverage probability, given by
			\begin{equation}
			\small
				\mathbb{P}_{\mathrm{cov}}^{\vartheta}\left( r_{0},z_{t} \right) = {\sum_{l = 0}^{m_{\vartheta} - 1}\frac{\left( {- \tau_{\vartheta}} \right)^{l}}{l!}}\frac{d^{l}}{d\tau_{\vartheta}^{l}}\mathcal{L}_{I|R_{0},Z_{t}}\left( \tau_{\vartheta} \right)
				\vspace{-0.5em}
			\end{equation}
			in which $\mathcal{L}_{I|R_{0},Z_{t}}\left( \tau_{\vartheta} \right) = \exp\left( - 2\pi\lambda_{\mathrm{b}}{\sum\limits_{\xi \in \{ \mathrm{L},\mathrm{N}\}}{\int_{r_{\xi}^{\vartheta}}^{r_{\mathrm{M}}}{P_{\xi}\left( x,z_{t} \right)\gamma_{\xi}\left( x,\tau_{\vartheta},z_{t} \right)xdx}}} \right)$ with $r\mathrm{_{L}^{L}} = r\mathrm{_{N}^{N}} = r_{0}$, $r\mathrm{_{L}^{N}} = \mathrm{min}\left( \rho\mathrm{_{N-L}}, r_{\mathrm{M}} \right)$, $r\mathrm{_{N}^{L}} = \mathrm{max}\left( \rho\mathrm{_{L-N}}, 0 \right)$, $\gamma_{\xi}\left( x,\tau_{\vartheta},z_{t} \right) = 1 - \left( \frac{m_{\xi}}{m_{\xi} + \tau_{\vartheta}\mathrm{P_{t}G_{tot}}\zeta_{\xi}\left( x,z_{t} \right)} \right)^{m_{\xi}}$, and $\tau_{\vartheta} = \frac{m_{\vartheta}\mathrm{T}}{\mathrm{P_{t}G_{tot}}\zeta_{\vartheta}\left( {r_{0},z_{t}} \right)}$, respectively.
		\end{theorem}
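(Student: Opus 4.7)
The plan is to decompose the coverage event according to whether a handover occurs during the $t$-th movement, and then apply the standard Laplace-transform technique for Nakagami-$m$ fading to the two resulting conditional probabilities. First I would rewrite (8) in the equivalent form $P_{\mathrm{cov}}=(1-\kappa)\,\mathbb{P}(\mathrm{SIR}>\mathrm{T})+\kappa\,\mathbb{P}(\mathrm{SIR}>\mathrm{T},\bar H)$, by substituting $\mathbb{P}(\mathrm{SIR}>\mathrm{T},H)=\mathbb{P}(\mathrm{SIR}>\mathrm{T})-\mathbb{P}(\mathrm{SIR}>\mathrm{T},\bar H)$. This matches the two-term structure of (9): the first summand (with prefactor $1-\kappa$) comes from the unconditional coverage probability, while the second (with prefactor $\kappa$ and the extra factor $\mathbb{P}(\bar H_\vartheta\mid r_0,z_t)$) comes from the joint coverage/no-handover event, with $\mathbb{P}(\bar H_\vartheta\mid r_0,z_t)=1-\mathbb{P}(H_\vartheta\mid r_0,z_t)$ obtained from Lemma \ref{Lemma1} via (6).

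Next, for each of the two terms I would condition on (i) the LoS/NLoS type $\vartheta$ of the serving GBS, (ii) the horizontal distance $r_0$ to that GBS, and (iii) the altitude $Z_t=z_t$ of the UAV-UE at $X_t$. By the law of total probability, the first term becomes
\[
\sum_{\vartheta\in\{\mathrm{L,N}\}}\!\int_{h_{\mathrm{lb}}}^{h_{\mathrm{ub}}}\!\!\int_{0}^{r_{\mathrm M}} A_\vartheta(z_t)\,\mathbb{P}^{\vartheta}_{\mathrm{cov}}(r_0,z_t)\,\tilde f_{R_0^{\vartheta}}(r_0,z_t)\,f_{Z_t}(z_t)\,dr_0\,dz_t,
\]
where $A_\vartheta(z_t)$ is the $\vartheta$-association probability from (7), and $\tilde f_{R_0^{\vartheta}}$ is the PDF of the serving distance restricted to a $\vartheta$-type association, namely the nearest-$\vartheta$ PDF $f_{R_0^{\vartheta}}$ multiplied by the probability that no $\xi$-type GBS ($\xi\neq\vartheta$) lies within $r_\xi^{\vartheta}$, and renormalised by $A_\vartheta(z_t)$. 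The outer limit $r_{\mathrm M}$ comes from the directional-antenna geometry described before Lemma \ref{Lemma1}. For the joint term $\mathbb{P}(\mathrm{SIR}>\mathrm{T},\bar H)$, the same conditioning simply inserts the additional factor $\mathbb{P}(\bar H_\vartheta\mid r_0,z_t)$.

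The main technical step is the derivation of $\mathbb{P}^{\vartheta}_{\mathrm{cov}}(r_0,z_t)$. Since $\Omega_\vartheta\sim\mathrm{Gamma}(m_\vartheta,1/m_\vartheta)$ with integer $m_\vartheta$, conditioning on $I$ gives $\mathbb{P}(\mathrm{SIR}>\mathrm{T}\mid I)=\sum_{l=0}^{m_\vartheta-1}(\tau_\vartheta I)^l e^{-\tau_\vartheta I}/l!$ with $\tau_\vartheta=m_\vartheta\mathrm{T}/(\mathrm{P_t G_{tot}}\zeta_\vartheta(r_0,z_t))$. Taking expectation over $I$ and rewriting $I^l e^{-\tau_\vartheta I}=(-1)^l\partial^l_{\tau_\vartheta}e^{-\tau_\vartheta I}$ yields (10). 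The Laplace transform $\mathcal{L}_{I\mid R_0,Z_t}(\tau_\vartheta)$ then follows from the PGFL of $\Phi\setminus\{g_0\}$ on the horizontal disk of radius $r_{\mathrm M}$: splitting interferers into LoS/NLoS with probabilities $P_\xi(x,z_t)$, evaluating the Nakagami-$m_\xi$ Laplace transform $\mathbb{E}_{\Omega_\xi}[\exp(-\tau_\vartheta\mathrm{P_t G_{tot}}\zeta_\xi(x,z_t)\Omega_\xi)]=(m_\xi/(m_\xi+\tau_\vartheta\mathrm{P_t G_{tot}}\zeta_\xi(x,z_t)))^{m_\xi}$, and integrating over the annulus $[r_\xi^{\vartheta},r_{\mathrm M}]$ produces the stated expression in terms of $\gamma_\xi(x,\tau_\vartheta,z_t)$.

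The main obstacle I foresee is the bookkeeping of the LoS/NLoS exclusion radii $r_\xi^{\vartheta}$ consistently across all three places they appear: the strongest-average-RSS rule forces that, given a $\vartheta$-type serving GBS at horizontal distance $r_0$, no $\xi$-type GBS may lie within $\rho_{\mathrm{L\text{-}N}}$ or $\rho_{\mathrm{N\text{-}L}}$ (as appropriate), and this single constraint simultaneously shapes the support of $\tilde f_{R_0^{\vartheta}}$, the inner limit in $A_\vartheta(z_t)$ of (7), and the lower radial limit of the PGFL integral defining $\mathcal{L}_{I\mid R_0,Z_t}$. Once this is handled carefully, the remainder is a routine assembly of (6), (7), (10) and the Laplace transform, and the $(1-\kappa)/\kappa$ splitting follows immediately from the first step.
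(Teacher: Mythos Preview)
Your proposal is correct and follows essentially the same approach the paper indicates: the paper's own proof is only a pointer to Theorem~1 of \cite{ref11} together with the two modifications you carry out explicitly, namely truncating the PGFL integral at $r_{\mathrm M}$ for the directional antenna and inserting the $(1-\kappa)/\kappa$ split with the factor $\mathbb{P}(\bar H_\vartheta\mid r_0,z_t)$ for the handover event. Your algebraic rearrangement of (8), the Gamma-CCDF/Laplace-transform derivation of (10), and the handling of the exclusion radii $r_\xi^{\vartheta}$ are exactly the ingredients the cited reference supplies, so nothing is missing.
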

	
		\begin{proof}
			The result can be proved by the modification of Theorem 1 in \cite{ref11} by considering the directional antenna and the handover event, where the former limits the interfering GBSs range and the latter affects the connection failure probability. The complete proof is omitted due to the space limitation.
		\end{proof}
		\vspace{-0.4em} 
		It is worth noting that the nearest association strategy can be seen as a special case of the strongest average RSS strategy, where the GBS type is neglected for the association and handover.
		\begin{corollary}
			When the nearest association strategy is adopted, the coverage probability of the typical 3D mobile UAV-UE equipped with directional antenna is given by
			\begin{equation}
			\small
			\begin{aligned}
					&P_{\mathrm{cov}}^{\mathrm{n}} = \\&\left( {1 - \kappa} \right)\sum_{\vartheta \in \{ \mathrm{L,N}\}} {\int_{h_{\mathrm{lb}}}^{h_{\mathrm{ub}}}{{\int_{0}^{r_{\mathrm{M}}}P_{\vartheta}\left( r_{0}, z_{t} \right) \mathbb{P}_{\mathrm{cov|n}}^{\vartheta}\left( r_{0},z_{t} \right) }}}f_{R_{0}}^{\mathrm{n}}\left( r_{0} \right) \\& \times f_{Z_{t}}\left( z_{t}\right)dr_{0}dz_{t} +\kappa\sum_{\vartheta \in \{ \mathrm{L,N}\}}{\int_{h_{\mathrm{lb}}}^{h_{\mathrm{ub}}}{{\int_{0}^{r_{\mathrm{M}}}P_{\vartheta}\left( r_{0}, z_{t} \right)}}} \\& \times {\mathbb{P}_{\mathrm{n}}\left(\bar{H} \mid r_{0},z_{t}\right)} \mathbb{P}_{\mathrm{cov|n}}^{\vartheta}\left( r_{0},z_{t} \right)f_{R_{0}}^{\mathrm{n}}\left( r_{0}\right) f_{Z_{t}}\left( z_{t}\right) dr_{0}dz_{t}
				\end{aligned}
			\end{equation}
			where $\mathbb{P}_{\mathrm{cov|n}}^{\vartheta}\left( r_{0},z_{t} \right)$ is similar to (10) with $\mathcal{L}_{I|R_{0},Z_{t}}^{\mathrm{n}}\left( \tau_{\vartheta} \right) = \exp\left( - 2\pi\lambda_{\mathrm{b}}{\sum\limits_{\xi \in \{ \mathrm{L},\mathrm{N}\}}{\int_{r_{0}}^{r_{\mathrm{M}}}{P_{\xi}\left( x,z_{t} \right)\gamma_{\xi}\left( x,\tau_{\vartheta},z_{t} \right)xdx}}} \right)$, $f_{R_{0}}^{\mathrm{n}}\left( r_{0} \right) = 2\pi\lambda_{\mathrm{b}}r_{0} \exp \left( - \pi \lambda_{\mathrm{b}} r_{0}^{2} \right)$, and $\mathbb{P}_{\mathrm{n}}\left(H \mid r_{0},z_{t}\right)$ is derived by setting $\vartheta = \varsigma$ in (5).
		\end{corollary}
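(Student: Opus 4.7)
The plan is to obtain Corollary 1 as a direct specialization of Theorem 1, exploiting the remark that the nearest association strategy is the sub-case of the strongest average RSS rule in which the LoS/NLoS tag of a candidate GBS is irrelevant to both the cell selection and the handover trigger. Concretely, I would start from the two-term decomposition in (9), keep the outer structure (no-handover plus handover-with-failure-probability) intact, and rewrite each ingredient under the new association rule.

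First, I would replace the serving-distance law. Under nearest association, the typical UAV-UE links to the closest GBS regardless of propagation state, so $R_0$ is the contact distance of a 2D HPPP of density $\lambda_{\mathrm{b}}$, yielding $f_{R_0}^{\mathrm{n}}(r_0)=2\pi\lambda_{\mathrm{b}} r_0 \exp(-\pi\lambda_{\mathrm{b}} r_0^{2})$, and the conditional probability that this nearest GBS is of type $\vartheta$ is simply $P_{\vartheta}(r_0,z_t)$ by the definition of the LoS probability. This already accounts for the product $A_{\vartheta}(z_t)\tilde f_{R_0^{\vartheta}}(r_0,z_t)$ in (9): the association probability $A_{\vartheta}(z_t)$ collapses into the pointwise factor $P_{\vartheta}(r_0,z_t)$, and the conditional PDF of $R_0^{\vartheta}$ collapses into $f_{R_0}^{\mathrm{n}}(r_0)$, because there is no longer any need to exclude regions where a closer GBS of the other type would outperform the serving one.

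Next, I would update the interference range. Since every GBS farther than $r_0$ is a potential interferer regardless of its type, the lower limits $r_{\xi}^{\vartheta}$ appearing in $\mathcal{L}_{I\mid R_0,Z_t}(\tau_{\vartheta})$ all reduce to $r_0$; this yields the stated $\mathcal{L}_{I\mid R_0,Z_t}^{\mathrm{n}}(\tau_{\vartheta})$, and propagating it through the derivative identity (10) gives $\mathbb{P}_{\mathrm{cov}\mid\mathrm{n}}^{\vartheta}(r_0,z_t)$. For the handover term, I would argue that under nearest association a handover occurs exactly when the UAV-UE, after moving to $X_t$, finds some GBS closer than $g_0$ inside $\mathcal{O}(q_t,r_{\mathrm{M}})$; this is precisely Lemma 1 with $D_{\vartheta}^{\varsigma}$ collapsed to the identity (the $\vartheta=\varsigma$ branch), which is exactly how $\mathbb{P}_{\mathrm{n}}(H\mid r_0,z_t)$ is defined in the statement. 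Substituting these simplifications into the two summands of (9) reproduces (11) line by line.

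The only delicate step I expect is the collapse $A_{\vartheta}(z_t)\tilde f_{R_0^{\vartheta}}(r_0,z_t)\to P_{\vartheta}(r_0,z_t)\,f_{R_0}^{\mathrm{n}}(r_0)$: one has to verify carefully that the exponential void-probability factor $\exp\!\bigl(-2\pi\lambda_{\mathrm{b}}\int_0^{r_{\xi}^{\vartheta}}\!xP_{\xi}(x,z_t)\,dx\bigr)$ present in $\tilde f_{R_0^{\vartheta}}$ combines with the inner void term of $f_{R_0^{\vartheta}}$ to give the pure nearest-neighbor exponential $\exp(-\pi\lambda_{\mathrm{b}}r_0^{2})$. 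This follows by splitting the integral $\int_0^{r_0}xP_{\mathrm{L}}\,dx+\int_0^{r_0}xP_{\mathrm{N}}\,dx=\int_0^{r_0}x\,dx=r_0^{2}/2$ using $P_{\mathrm{L}}+P_{\mathrm{N}}=1$, but one must also check that the auxiliary distance bound $r_{\xi}^{\vartheta}$ used in the complementary-type void probability matches the range of candidate interferers; once this bookkeeping is done, all remaining manipulations are routine and (11) follows.
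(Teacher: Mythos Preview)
Your proposal is correct and follows exactly the approach indicated in the paper: the authors simply state that under nearest association each ingredient of Theorem~1 --- the serving-distance law, the conditional coverage term, and the handover probability --- is re-derived ``by analogy'' and simplified, with the details omitted for space. Your treatment is in fact more explicit than the paper's, and your observation that the void-probability factors recombine via $P_{\mathrm L}+P_{\mathrm N}=1$ (once $r_{\xi}^{\vartheta}$ collapses to $r_0$) is precisely the bookkeeping the paper skips.
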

	
		\begin{proof}
			For the nearest association strategy, $\mathbb{P}_{\mathrm{cov|n}}^{\vartheta}\left( r_{0},z_{t} \right)$, $\mathbb{P}_{\mathrm{n}}\left(H_{\vartheta} \mid r_{0},z_{t}\right)$ and $f_{R_{0}}^{\mathrm{n}}\left( r_{0} \right)$ can be derived by analogy with Theorem 1, which simplifies the association and handover analysis. The complete proof is omitted due to the space limitation. 
		\end{proof}
		
		\vspace{-1em}
		\section{SIMULATION RESULTS AND ANALYSIS}

		In this section, we verify the theoretical analysis via extensive simulations, and show the superiority of the directional antenna technology compared to that with omnidirectional antenna. Unless otherwise specified, we adopt the following default values: $\varphi_{\mathrm{b}} =120^{\circ}$, $\mathrm{\lambda_{b}=100GBSs/km^{2}}$, $\mathrm{T=-3.8dB}$, $m_{\mathrm{L}} = 3$, $m_{\mathrm{N}} = 1$, $\alpha_{\mathrm{L}} = 2.09$, $\alpha_{\mathrm{N}} = 3.75$, $\eta_{\mathrm{L}} = -41.1\mathrm{dB}$, $\eta_{\mathrm{N}} = -32.9\mathrm{dB}$, $\kappa = 0.3$, $\mathrm{P_{t}} = 46\mathrm{dBm}$, $\mathrm{a} = 9.61$, $\mathrm{b} = 0.16$, $\mu = 300 km^{-2}$, $h_{\mathrm{b}} = 30\mathrm{m}$, $h_{\mathrm{lb}} = 90\mathrm{m}$, and $h_{\mathrm{ub}} = 150\mathrm{m}$.

		\begin{figure}[t]
			\begin{center}
			\flushleft
				\subfigure[]{
					\begin{minipage}{0.45\textwidth}
						\centering
						\includegraphics[width = 1\textwidth]  {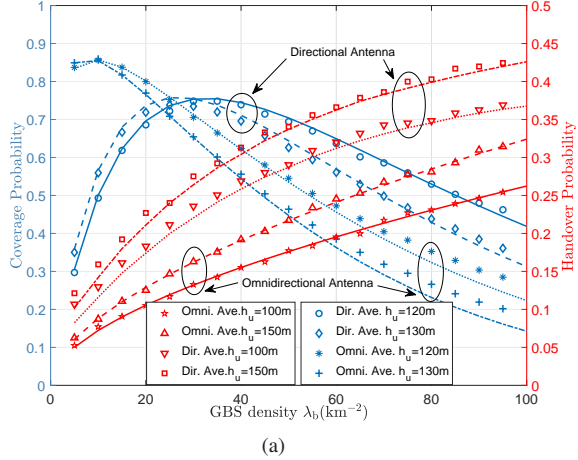}
					\end{minipage}
				}
				\subfigure[]{
					\begin{minipage}{0.45\textwidth}
						\centering
						\includegraphics[width = 1\textwidth]  {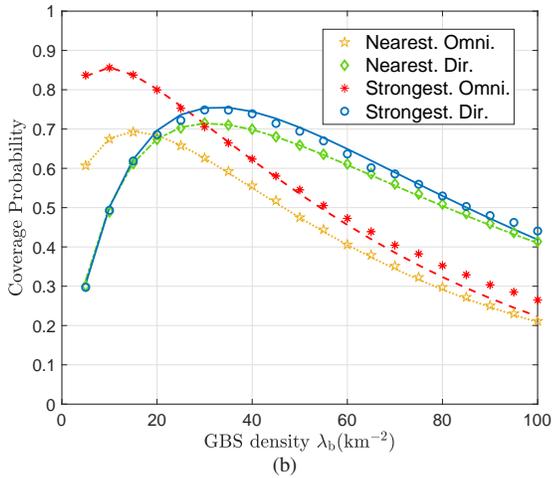}
					\end{minipage}
				}
				\vspace{-0.2cm}
				\captionsetup{font={scriptsize}}
				\caption{
					(a) Coverage probability and handover probability vs GBS density $\mathrm{\lambda_{b}}$ with strongest average RSS association, (b) coverage probability vs GBS density $\mathrm{\lambda_{b}}$ for the comparison with two association strategies.
				}
				\label{fig_2}
				\vspace{-0.5cm}
			\end{center}
		\end{figure}

		In Fig. \ref{fig_2}(a), we plot the coverage probability and handover probability as a function of GBS density under the strongest average RSS association for $\mathrm{T=-3.8dB}$. We observe that the handover probability grows with the increasing GBS density and the height of UAV-UE. This is because the UAV-UE sees more GBSs in the above conditions, which brings extra handover opportunities. Compared with the omnidirectional antenna model, equipping with the directional antenna model limits the number of potential GBSs, leading to a considerable reduction (more than 0.1 in this case) in handover probability. When $\mathrm{\lambda_{b}<20GBSs/km^{2}}$, the coverage probability of UAV-UE with the directional antenna is smaller than that with the omnidirectional antenna. This is because when the GBS density is extremely small, there is the possibility that no GBSs exist within the receiving range of the typical UAV-UE with directional antenna, leading to the decline of coverage probability. We observe that there exists an optimal GBS density that maximizes the coverage probability due to the incremental interference. What's more, it is shown that a higher average height $Z_{t}$ results in a lower coverage probability in most of $\lambda_{\mathrm{b}}$, while the opposite exists when $\lambda_{\mathrm{b}}$ is smaller. This is due to the compromise between the incremental LoS probability and the increased path loss at a larger altitude for different GBS densities. 

		In Fig. \ref{fig_2}(b), the coverage probabilities achieved by strongest average RSS and nearest association strategies are compared. We observe different phenomenon when UAV-UEs are equipped with omnidirectional attenna and directional antenna, respectively. With omnidirectional attenna, the superiority of the strongest average RSS association over the counterpart diminishes as the GBS density increases. This is because as GBS density increases, the probability that both association strategies lead to the connection with the same GBS improves. When UAV-UEs are equipped with directional attenna, the gap between the two strategies first increases, and then keeps unchanged with the increase in GBS density. This can be caused by the nonexistence of GBSs within the interference receiving range of UAV-UEs when the GBS density is small. 

		\begin{figure}[t]
			\flushleft
			\subfigure[]{
				\begin{minipage}{0.45\textwidth}
					\centering
					\includegraphics[width = 1\textwidth]  {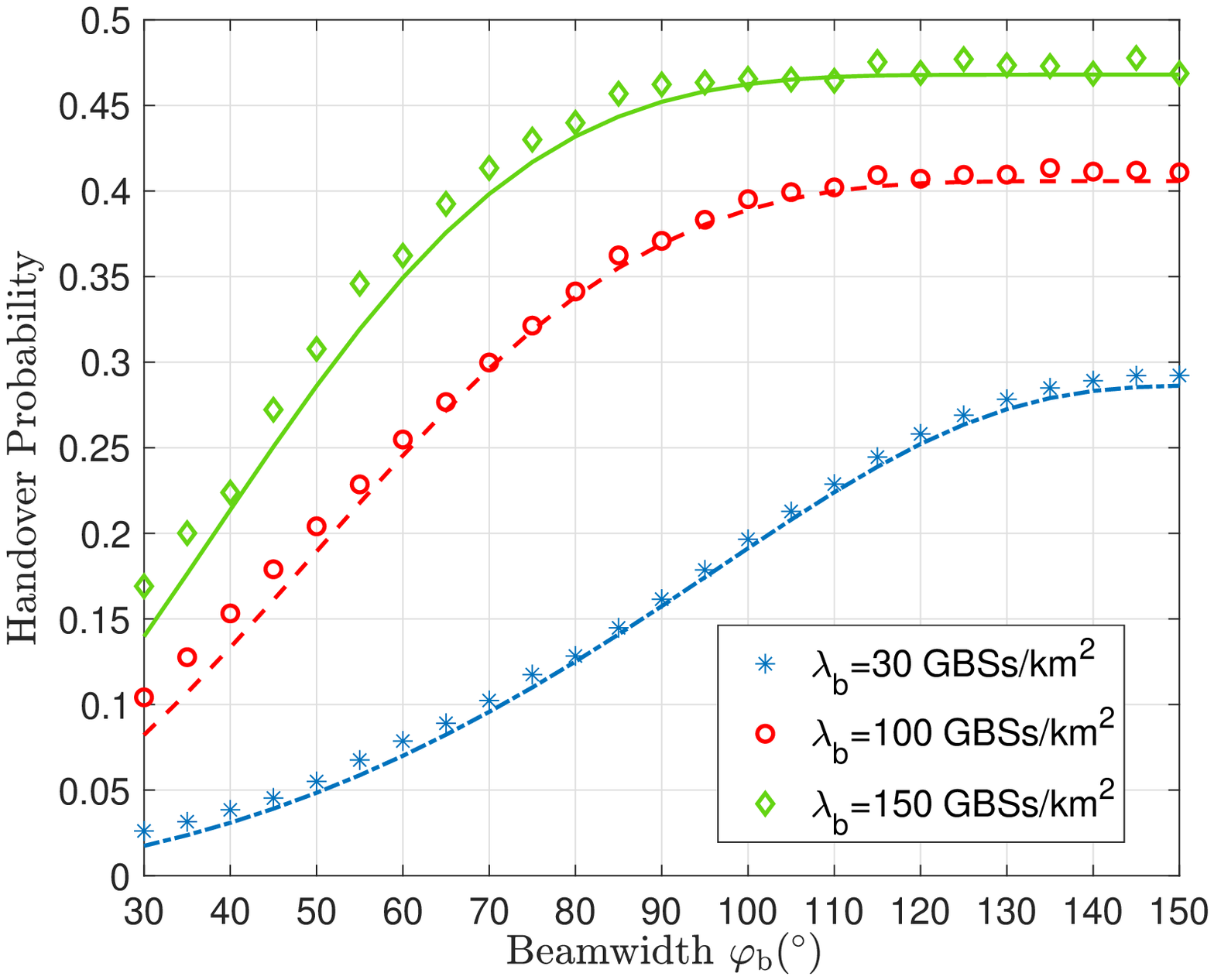}
				\end{minipage}
			}
			\subfigure[]{
				\begin{minipage}{0.45\textwidth}
					\centering
					\includegraphics[width = 1\textwidth]  {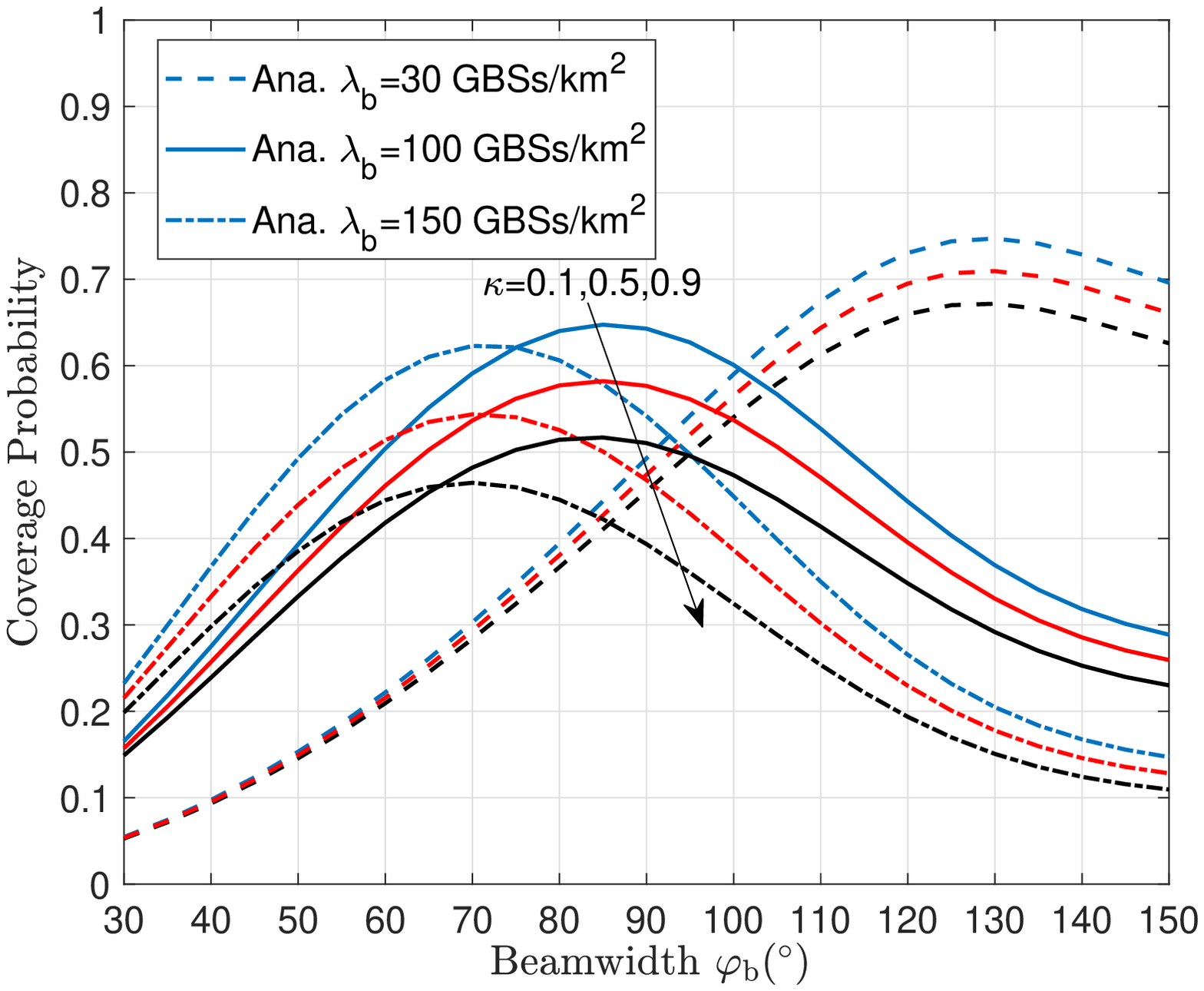}
				\end{minipage}
			}
			\vspace{-0.2cm}
			\captionsetup{font={scriptsize}}
			\caption{
				The impacts of antenna beamwidth on (a) handover probability, and (b) coverage probability, under the strongest average RSS association for different GBS densities $\lambda_\mathrm{b}$.
			}
			\label{fig_3}
			\vspace{-0.25cm}
		\end{figure}

		In Fig. \ref{fig_3}(a), we depict the handover probability as a function of antenna beamwidth for different GBS densities $\lambda_\mathrm{b}$ under the strongest average RSS association strategy. We observe that a higher $\lambda_\mathrm{b}$ leads to a larger handover probability. What's more, as the antenna beamwidth $\varphi_\mathrm{b}$ increases, the handover probability first grows and then converges. This is because both increasing $\lambda_\mathrm{b}$ and $\varphi_\mathrm{b}$ enlarge the number of potential target GBSs, which enhances the opportunities to switch over the strongest target GBS in the whole network during the movement. However, the strongest target GBS has fewer probability to lie within the area far away from the initial serving GBS, which explains the convergence of handover probability with regards to $\varphi_\mathrm{b}$.
		
		In Fig. \ref{fig_3}(b), we depict the coverage probability as a function of antenna beamwidth $\varphi_\mathrm{b}$ for different GBS densities $\lambda_\mathrm{b}$ and connection failure probability $\kappa$ under the strongest average RSS association. We observe that as $\varphi_b$ enlarges, the coverage probability first rises and then declines. A larger $\lambda_\mathrm{b}$ results in a higher coverage for a smaller beamwidth. On the contrary, a smaller $\varphi_\mathrm{b}$ leads to a higher coverage probability for a larger $\varphi_\mathrm{b}$. Both of the above can be explained by the tradeoff between the more preferred targeting GBSs and the extra aggregated interference due to the incremental receiving range of UAV-UEs. What's more, a larger $\kappa$ results in a lower coverage probability due to the handover induced connection failure. 
	\vspace{-1em}
	\section{Conclusion}

		In this letter, we proposed an analytical framework to evaluate the impact of directional antenna technology on the cellular-connected UAV network under the strongest average RSS association strategy, in which UAV-UEs move according to a 3D mobility model. The exact analytical expressions for handover probability and coverage probability were derived. It was concluded that the optimal UAV-UE antenna beamwidth decreases with the GBS density, where the omnidirectional antenna model was preferred in the sparse network scenario. In addition, the strongest average RSS association gradually degenerates to the nearest association with the growth of GBS density. 

	\vspace{-1.5em}
	\begin{appendices}
		\section{ PROOF OF LEMMA 1 }
			From Fig. \ref{fig_1}, a handover event occurs when there exists at least one GBS in the area $\mathcal{B} \setminus \mathcal{A} \cap \mathcal{B}$. Hence, the handover probability is given by
			\begin{equation}
			\small
				\begin{aligned}
					\mathbb{P}\left (H_{\vartheta}^{\varsigma} \mid \right.&\left.r_{0},\theta, \varrho_{t},z_{t},z_{t-1} \right ) \\&= \mathbb{P}\left ( N \left( \mathcal{B} \setminus \mathcal{A} \cap \mathcal{B} \right) > 0 \mid r_{0},\theta,\varrho_{t},z_{t},z_{t-1} \right ) \\&\overset{\mathrm{\left( a \right) }}{=} 
					1 - \exp\left ( -\lambda_{b} \left( \left | \mathcal{B} \right | - \left | \mathcal{A} \cap \mathcal{B} \right | \right) \right )
				\end{aligned}
			\vspace{-0.5em}
			\end{equation}
			where $N\left ( \cdot  \right ) $ is the number of GBSs in the specified area, and (a) results from the null probability of a 2D PPP. 
	
			With the strongest average RSS association strategy, when UAV-UE is located at $X_{t-1}$, $\mathcal{A}=\mathcal{O}\left( q_{t-1}, D_{\vartheta}^{\varsigma }\left( r_{0} \right)\right) $ denotes the circle with $\mathrm{q_{t-1}}$ being the center and $D_{\vartheta}^{\varsigma }\left( r_{0} \right)$ being the corresponding radius, which means that the serving GBS located at $g_{0}$ lies within the signal/interference receiving range. Once the UAV-UE is flying to $X_{t}$, there may exist the following two cases: 
			\begin{itemize}
				\item[(a)] When the height of the UAV-UE becomes higher and the signal/interference receiving range is larger, i.e., $D_{\vartheta}^{\varsigma }\left( R \right) \le r_{\mathrm{M}}$, we have $\mathcal{B}=\mathcal{O}\left( q_{t}, D_{\vartheta}^{\varsigma }\left( R \right) \right)$, where $D_{\vartheta}^{\varsigma }\left( R \right)$ denotes the horizontal distance between the nearest interfering $\vartheta$-type GBS and the typical UAV-UE whose serving GBS $g_{0}$ is $\varsigma$-type at the $t$-th moment.
				\item[(b)] When the height of the UAV-UE becomes lower and the signal/interference receiving range is smaller, i.e., $0<r_{M}<D_{\vartheta}^{\varsigma }\left( R \right)$, we have $\mathcal{B}=\mathcal{O}\left( q_{t}, r_{\mathrm{M}} \right) $. 
			\end{itemize}
		
			According to Heron's formula and trigonometric function, we derive $\left | \mathcal{O}\left (q_{t}, y \right ) \right | - \left | \mathcal{O}\left (q_{t}, y \right ) \cap \mathcal{O}\left (q_{t-1}, x\right ) \right | = F \left ( x ,y \right )$ which is given in Lemma \ref{Lemma1}.
			What's more, by averaging over $\Theta$, $\rho_{t}$, and $Z_{t-1}$ with PDFs being $f_{\Theta}\left( \theta \right) $, $f_{\rho_{t}}\left( \varrho_{t} \right) $ and $f_{Z_{t-1}}\left( z_{t-1} \right) = \frac{1}{h_{\mathrm{ub}}-h_{\mathrm{lb}}}$, the resulting conditional handover probability $\mathbb{P}\left(H_{\vartheta}^{\varsigma} \mid r_{0},z_{t}\right)$ is given by (5). This completes the proof.
			
	\end{appendices}

\end{document}